\pgfplotsset{width=13cm,compat=1.10}
\def\i{\ell}
\newtheorem{theorem}{{\bf \sc Theorem}}
\newtheorem{lemma}{{\bf \sc Lemma}}
\newtheorem{proposition}{{\bf \sc Proposition}}
\newtheorem{remark}{{\bf \sc Remark}}
\crefname{claim}{claim}{claims}
\title{Bertrand Menu Competition\thanks{
We are grateful to Felipe Brugu\'es, Erik Madsen, Ellen Muir, Debraj Ray,  Rakesh Vohra, and especially to Koji Yokote for their helpful comments and conversations, as well as to seminar audiences.  Elizabeth Nanami Aoi, Masato Eguchi, and  Aika Okemoto provided excellent research assistance.  Fuhito Kojima is supported by the JSPS KAKENHI Grant-In-Aid 21H04979 and JST ERATO Grant Number JPMJER2301, Japan. Bobak Pakzad-Hurson acknowledges support from the James M. and Cathleen D. Stone  Inequality Initiative.
}}
\author{Fuhito Kojima\thanks{Department of Economics, The University of Tokyo, and the University of Tokyo Market Design Center (UTMD).
Email: fuhitokojima1979@gmail.com}
\hspace{0.5cm} Bobak Pakzad-Hurson\thanks{Department of Economics, Brown University. Email: bph@brown.edu}}
\begin{document}
\maketitle

\begin{abstract}
    We study a variation of the price competition model a la Bertrand, in which firms must offer menus of contracts that obey monotonicity constraints, e.g., wages that rise with worker productivity to comport with equal pay legislation. While such constraints limit firms' ability to undercut their competitors, we show that Bertrand's classic result still holds: competition drives firm profits to zero and leads to efficient allocations without rationing. Our findings suggest that Bertrand's logic extends to a broader variety of markets, including labor and product markets that are subject to real-world constraints on pricing across workers and products.
\end{abstract}

\newpage

\section{Introduction}

Bertrand's model of simultaneous price setting is a workhorse in pinning down equilibrium outcomes in markets with competing firms. When firms sell homogeneous goods and experience constant marginal cost of production, Bertrand competition between as few as two firms drives prices to marginal cost and profit to zero. Papers in the industrial organization literature \citep[e.g.][]{thomadsen05ownership,EINAV2021389} nest Bertrand competition to determine prices (at marginal cost), papers in the labor literature \citep[e.g.][]{Postel2002,Cahuc2006} nest Bertrand competition to determine wages (at marginal productivity), and papers in the education literature \citep[e.g.][]{NPH} nest Bertrand competition to determine financial aid packages (at marginal university utility). The Bertrand model is so relied upon that its conclusions are often assumed as a primitive; \cite{rothschildstiglitz}, \cite{costrelllourey}, and \cite{azevedo} abstract away from the competitive process and begin with the premise that the presence of---or even the potential entry of---a competitor leads to zero profits.

In this paper, we study whether the conclusions of Bertrand's model hold in settings where firms compete over menus of products, wherein firms face a monotonicity constraint in pricing across products. Many competitive situations feature heterogeneous products---producers sell goods of varying quality, employers hire workers of different productivity, and universities provide aid to students of disparate ability. If firms were able to set prices independently across products on the menu, the standard Bertrand argument would imply that competition drives profit to zero for each product. However, under monotonic pricing, in which more productive workers, higher quality and cost goods, and higher ability students are associated with higher prices, a ``product-by-product'' Bertrand argument does not hold; a firm may be unable to change the price of one product in isolation depending on prices of others.

A monotonicity constraint on prices is realistic to capture important considerations across markets. First, monotonicity may be an exogenously imposed constraint due to legal regulations. For example, \cite{cowgill-PH} characterize equal pay laws as requiring wages within a firm to be monotone non-decreasing in productivity; otherwise, a firm would be in violation of the principle of ``equal pay for equal (or better) work,'' a common legal standard.

Second, there may be endogenous reasons that a monotonicity condition obtains. Consider a model where firms compete only along the price dimension, as in Bertrand's original framework, by simultaneously announcing a menu of prices for different products for consumers to then select among. Consumers have different types corresponding to a different ``satisficing'' level of utility. As examples, consider markets where: patients vary in their medical ailments and (the insurance company of) each patient will optimally select the cheapest intervention that corrects her ailment even if that intervention is costlier to provide for a hospital than another intervention that corrects her ailment; heterogeneous buyers require different manufacturing equipment to construct final goods in their respective industries and each will optimally select the cheapest machine capable of completing its task even if it is costlier to produce for a supplier than another machine capable of constructing the final good; heterogeneous workers with different skills select which job to fill within their employing firms and each will optimally select the highest paying job she has the skills to compete even if it is of lower value to the employer than another job she is qualified for. Monotonic prices prevent adverse selection as described above, and no equilibrium can contain non-monotonicities in the prices of transacted goods---if there were, a firm that transacts according to a non-monotonic menu would have a profitable deviation to raising the price for the higher quality/cost product  to induce a monotonic menu which would in turn induce the consumer to select the ``correct'' product without affecting transfers.\footnote{In addition to preventing adverse selection, monotonicity in pricing may be necessary to prevent moral hazard in some markets. For example, again consider the worker-firm example, but suppose firms have the capability of selecting each worker's role after posting a menu of wages but do not observe each workers' skills.
     It is likely impossible for a worker to convince an employer that she has skills she does not have---she cannot pretend to speak a foreign language she does not know---but she could hide that she can speak another foreign language she does know. Monotonicity in pricing removes incentives for workers to ``shirk'' in this way. A similar case for monotonicity is made to prevent shirking in matching markets---a student whose financial aid package is non-monotonic in her grades may intentionally reduce effort to secure more funding---by  \cite{BalinskiSonmez99} and \cite{sonmez}.} Therefore, it is without loss of generality to impose monotonicity as a constraint to guard against manipulation incentives of consumers.

Formally, we study a model with two homogeneous firms that simultaneously announce menus of prices and quantities  over a continuum of products, subject to an exogenously imposed monotonicity constraint that each firm sets a weakly higher price for ``more valuable'' products. To fix ideas, we henceforth describe our model in the language of constant-returns-to-scale firms that seek to hire workers with different productivities. We allow for endogenous contracting (i.e. firms can elect not to hire  workers of particular productivity levels) and rationing (i.e. firms can hire a strict subset of workers available at any productivity level).

Our main result finds that the set of equilibrium allocations  corresponds exactly to the set of Bertrand allocations: the allocation is efficient (almost all workers are hired) and the wage of almost every worker equals her productivity. Efficiency implies both that firms collectively hire workers of all productivity levels and that all workers of each productivity level are hired, that is, firms do not collectively restrict the set of contracts offered, nor do they ration hiring. 
Our result suggests that a Bertrand allocation obtains in a broader variety of markets, and in particular, that multi-product firms price goods at marginal cost/price under competition, even with relevant constraints on pricing across products.

We prove our result via a comparison to a cooperative, matching-with-wages game in which firms can hire, fire, and poach workers, but cannot unilaterally lower  wages of matched workers. We first show that the set of core allocations of the cooperative game corresponds to the set of Bertrand allocations, and second that the set of core allocations corresponds to the set of equilibrium allocations of our original, non-cooperative game. Therefore, our result suggests that a Bertrand allocation obtains ``in the long run,'' even without the structure of and timing enforced by our non-cooperative game.

Our proof also demonstrates that our finding is robust to the (``endogenous'' and ``exogenous'' monotonicity) market settings described above. Our Bertrand result holds if: firms are only able to announce wages and cannot restrict hiring, firms must abide by monotonicity only for accepted wages rather than for wage offers, and if there are more than two firms. Moreover, our result also holds in the case where there are finitely many different products (worker productivity levels) instead of a continuum of different products; the ``finite case'' is significantly easier to analyze because there are fewer restrictions and complexities implied by the monotonicity constraint, and its proof is available upon request.

Our result contrasts with findings on Bertrand competition with endogenous contracting in the literature. 
\cite{rothschildstiglitz} show no contracting occurs for certain parameter values, while  \cite{azevedo} show, in general, not every contract is offered by firms in equilibrium. These distinctions are driven by adverse selection. As described, our monotonicity constraint guards against selection ``downward'' by workers, whereas the analogous conditions in \citet{rothschildstiglitz} and \citet{azevedo} consider both ``downward'' and ``upward'' selection. With the restriction of monotonic wages, we show that it is always possible for firms to deviate from a non-Bertrand allocation such that wages remain monotone and profit increases; 
our proof considers exhaustive cases, and in each, a firm can hire or fire to increase profit, even when it must ``sacrifice'' by increasing the wages of  productive, underpaid workers to abide by monotonicity.
 
Throughout, we describe our model in the context of two firms competing to hire a continuum of workers by simultaneously making weakly positive wage offers. Instead, we could have described our model in the context of two firms competing to sell differentiated products to a continuum of consumers by simultaneously announcing weakly positive prices. These two settings yield analogous results in equilibrium: in the former, almost every worker is hired at a wage equaling her marginal productivity. In the latter, almost every good is sold at a price equaling its marginal cost.

\Cref{model section} presents our model and main result, \Cref{proof section} presents the proof, and \Cref{robust} discusses robustness. Technical details are relegated to the Appendix.

\section{Model and Result}\label{model section}

There are two firms $1,2$ and a continuum of workers. 
A worker type is identified by a pair  $(v,\i) \in [0,1]^2$, where  $v$ is the worker's productivity, and $\i$ is an index.\footnote{Describing a worker's type as both a productivity and an index will be useful to formally account for the ``share'' of workers of each productivity level who receive job offers from both firms, from just firm 1, from just firm 2, and from neither firms. It will similarly be useful to describe the ``share'' of workers of each productivity level who accept offers from each firm.} We assume there is a non-atomic measure $\mu$ that governs the distribution of worker types, where $f(v)$ denotes the density of workers with productivity $v$. Let $F(\cdot)$ denote the associated cumulative distribution function.\footnote{More formally, we define a Borel measure $\tilde \mu^\text{p}$ such that
 $
 \tilde \mu^\text{p}\big([0,x]\big)=F(x)$  for all $x \in [0,1]$,
which exists and is unique \citep[Proposition 25, Section 20.3]{royden}.  Let $\mu^\text{p}$ be the unique measure defined on the Lebesgue measurable sets and coincides with $\tilde \mu^\text{p}$ on Borel measurable sets: such $\mu^\text{p}$ exists and is unique because of the Caratheodory Extension Theorem and the Hahn Extension Theorem \citep[see][Theorems 7.3 and 7.2']{stokeylucas}. $\mu^{\text{p}}$ is the Lebesgue measure on Lebesgue  sigma-algebra $\mathcal{B}^{\text{p}}$ on $[0,1]$, representing the measure of productivity. Similarly, let $\mu^{\text{w}}$ be a measure on a sigma-algebra $\mathcal{B}^{\text{w}}$ on $[0,1]$, representing the measure of indices. We assume that both $\mu^{\text{p}}$ and $\mu^{\text{w}}$ are non-atomic. We assume that the measure $\mu$ over worker types is given by the product measure of  $\mu^{\text{p}}$ and  $\mu^{\text{w}}$, and the  density function associated with $\mu$  is given by $f(v) \times g(\i)$, where  $f(v)$ is associated with measure $\mu^{\text{p}}$ and represents the density of workers with productivity $v$ while $g(\i)$ is associated with measure $\mu^{\text{w}}$ and represents the density of workers whose indices are $\i$.} We additionally assume  $0<\underline f \le \bar f<+\infty$, where $\underline f:=\inf \big\{f(v)| v \in [0,1]\big\}$ and $\bar f:=\sup \big\{f(v)| v \in [0,1]\big\}$.

The game proceeds as follows. First, each firm $i$ simultaneously announces a measurable set of workers  $\mathcal{S}_i$ to which it makes job offers, as well as a measurable function $\mathcal{w}_i$ on $\mathcal{S}_i$ where $\mathcal{w}_i(v,\i)$ is the wage offered to worker $(v,\i)\in \mathcal{S}_i$. We require wage offers to be non-negative, that is, $\mathcal{w}_i(v,\i)\geq 0$ for all $i$ and $(v,\i).$ 
Second, each worker observes the identity of the firm that made an offer to her (if any) and the associated wage offered to her and chooses to accept one of the offers or stay unassigned and receive the wage of zero. Each firm $i$ is matched to the subset of workers $ S_i\subset\mathcal{S}_i$ who accept its offer, and pays each such worker the offered wage.

Each worker's payoff is equal to her wage if she accepts an offer from a firm and zero otherwise. Firms have a constant-returns-to-scale production technology and seek to maximize profit. Formally, if $S_i$ is  measurable, then firm $i$ obtains payoff
$$
\int_{S_i} \big[v-\mathcal {w}_i(v,\i)\big]d\mu.
$$
If $S_i$ is nonmeasurable, then $i$'s payoff is $-1.$ Our solution concept is pure-strategy subgame perfect Nash equilibrium (``equilibrium'').

The main substantive restriction we make is that each firm's wage offers must be monotone non-decreasing in productivity. Formally, we assume that for any two workers $(v,\i)$ and $(v',\i')$ with $v\geq v'$ who receive wage offers from the same firm $i$, it is the case that  $\mathcal w_i(v,\i) \ge \mathcal w_i(v',\i')$. Note that this implies  all workers of the same productivity who receive offers from the same firm must receive the same wage. Therefore, we consolidate notation hereafter and describe wage offers as a function of productivity:  $\mathcal{w}_i(v)$ represents the wage available to each worker $(v,\i)$ who receives a job offer from $i$.

Each strategy profile induces an \emph{allocation}, which specifies the distribution of workers hired by each firm and the wage paid to each productivity level by each firm.\footnote{We note that a strategy profile species firms' offers and workers' acceptance decisions.} Formally, an allocation for firm $i$ is
 $A_i:=\big\{(f_i(v),w_i(v))\big\}_{v \in [0,1]}$, where $f_i(\cdot)$ and $w_i(\cdot)$ are measurable functions such that:
\begin{enumerate}
    \item $f_i(v) \in [0,f(v)]$ is the density of workers of productivity $v$ hired by $i$,
    \item $w_i(v) \in [0,\infty)$ is the wage $i$ pays to each worker of productivity $v$ it hires, and 
    \item If $f_i(v)=0$, then we fix $w_i(v)=0$.
\end{enumerate}

An allocation is a tuple $A:=(A_1,A_2)$ where $A_i$ is an allocation for firm $i$ such that $f_1(v)+f_2(v) \le f(v)$ for each $v$, that is, total employment does not exceed the supply of workers (a feasibility requirement). Note that $w_i(\cdot)$ refers to the wages \emph{paid} in an allocation, while our earlier notation $\mathcal{w}_i(\cdot)$ refers to the wages \emph{offered}.

An allocation $A$ is a \emph{Bertrand allocation} if for all  $i\in \{1,2\}$ and almost all $v\in[0,1]$: $f_1(v)+f_2(v) = f(v)$, and $w_i(v)=v$ if $f_i(v)>0.$\footnote{Throughout, we write ``almost all'' to mean  ``except for a measure-zero set,'' as is commonly used in measure theory. That is, the stated properties hold for all $v\in [0,1]\setminus V$, where $V$ is a zero (Lebesgue) measure subset of $[0,1].$} Clearly, the set of Bertrand allocations is non-empty; consider the allocation in which firm 1 employs all workers and pays each worker a wage equal to her productivity (for all $v$, $f_1(v) = f(v)$, $w_1(v)=v$, $f_2(v)=0$, and $w_2(v)=0$).

The following is our main result:

\begin{theorem}
An allocation can be induced by an equilibrium if and only if it is a Bertrand allocation.
\label{no_EPL_equilibrium_prop}\end{theorem}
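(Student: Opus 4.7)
The plan is to prove both directions of the biconditional separately, with the only-if direction being substantially more involved.

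For the if direction, given any Bertrand allocation $A$, I would exhibit an equilibrium that induces it. Let both firms set $\mathcal{S}_i = [0,1]^2$ and $\mathcal{w}_i(v)=v$, and specify workers' acceptance strategies (resolving the resulting indifference) so that a mass $f_i(v)$ of each productivity $v$ is routed to firm $i$, implementing $A$. The monotonicity constraint is trivially satisfied. No firm deviation is profitable: any wage strictly below $v$ at productivity $v$ diverts those workers to the competitor, any wage strictly above $v$ generates negative profit on newly targeted workers, and shrinking $\mathcal{S}_i$ leaves the already-zero profit unchanged.

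For the only-if direction, I would argue by contradiction: suppose some equilibrium induces an allocation $A$ that is not Bertrand. As a first reduction, I would establish that $w_i(v)\le v$ almost everywhere on $\{v:f_i(v)>0\}$, since otherwise firm $i$ can strictly increase its profit by dropping offers to the overpaid workers; monotonicity is preserved because it only constrains workers in the offer set. Given this reduction, the failure of $A$ to be Bertrand manifests in one of two ways: (a) a positive-measure set $V$ of productivities has $f_1(v)+f_2(v)<f(v)$, or (b) a positive-measure set $V$ has some firm $i$ with $f_i(v)>0$ and $w_i(v)<v$. In both cases I would exhibit a profitable deviation by a ``poaching'' firm: in (a) it adds an offer on $V$ at a wage strictly below $v$, and in (b) firm $j\neq i$ offers slightly more than $w_i(v)$ on $V$ to attract workers from firm $i$ at a wage still strictly below productivity.

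The main obstacle is engineering these deviations so that they respect monotonicity when the two firms' existing wage schedules interlock. In case (b), firm $j$ may already offer at some $v'>v$ a wage below $w_i(v)$, or at some $v''<v$ a wage above $w_i(v)$, either of which blocks the direct insertion of the poaching offer. I would handle this through case analysis on the relative positions of the two schedules near $V$, exploiting the reduction $\mathcal{w}_j(v')\le v'$ to bound the cost of any compensating firing, and using $\underline f>0$ together with non-atomicity to guarantee that the targeted mass in $V$ remains strictly positive. The final computation shows that the net change in firm $j$'s profit from the combined deviation is strictly positive, contradicting the equilibrium hypothesis.
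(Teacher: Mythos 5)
The proposal is structurally sensible and the ``if'' direction matches the paper, but the ``only-if'' direction has a genuine gap: the case decomposition is too coarse and, more importantly, the sketch does not identify the key technical tool needed to make the deviations monotone and profitable.

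Your two cases (a) (underemployment somewhere) and (b) (someone paid below productivity) conflate situations that require very different constructions. Within (a), the critical sub-case is when the employing firm $j$ already pays $w_j(v)=v$ on a positive-measure set where slack exists (the paper's \Cref{case 3 remark}). There, firm $j$ cannot ``add an offer on $V$ at a wage strictly below $v$'' without violating monotonicity relative to the high wages it already pays at nearby lower productivities, and firing those nearby workers to make room can destroy more profit than the new hires generate. The paper resolves this by invoking the Lebesgue density / regularity theorem \citep[Theorem A, p.~68]{Halmos} to locate an interval $I$ on which the target set has measure exceeding any prescribed fraction $p<1$ of $\mu(I)$; the loss from firing the $\le(1-p)$ fraction of ``good'' incumbents in $I$ is then quadratically small relative to the gain from hiring the near-full-density slack at a low, common wage. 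You gesture at ``case analysis on the relative positions of the two schedules'' and ``bounding the cost of compensating firing,'' but without the density-point localization (or an equivalent device) the loss from firing is not controlled: $V$ may be so diffuse that every interval enclosing a chunk of $V$ also contains a comparable mass of workers earning strictly less than $v$ that must be sacrificed, so the ``net change in firm $j$'s profit'' may well be negative. Similarly, in your case (b), when the poaching firm already pays $v$ to its own workers at the same productivities (the paper's sixth case), inserting an offer of $w_i(v)+\epsilon<v$ at $v$ again clashes with monotonicity; the paper needs a compactness/cover argument to produce a short interval $[v_i'-\delta,v_i']$ on which a uniform poaching wage $v_i'-\delta$ is both monotone and profitable. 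Finally, the paper first passes through a cooperative game (\Cref{core and Bertrand result} and \Cref{equivalence result}) and then lifts blocks to non-cooperative deviations that respect the monotonicity of \emph{offers} rather than merely of \emph{paid} wages (the deviation $\sigma_j'$ that withholds rejected offers); your direct argument skips this offer-versus-paid distinction, which is exactly what makes \Cref{equivalence result} nontrivial. In short, the reductions and the overall contradiction scheme are right, but the measure-theoretic localization lemma and the precise profit/loss accounting it enables are the substance of the proof, and they are missing from the proposal.
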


\section{Proof}\label{proof section}
We prove \Cref{no_EPL_equilibrium_prop} in two steps. We first demonstrate that the set of Bertrand allocations is equivalent to the set of core allocations of a cooperative version of our game (which will be formally defined below), and then we  show that 
an allocation  
can be induced by an equilibrium of our non-cooperative game if and only if it is a core allocation of the cooperative game. This approach both aids in exposition (because the cooperative game considers only final allocations, and not strategies) and further generalizes our main finding on the focality of Bertrand allocations to a cooperative setting.

\subsection*{Step 1: Core of a Cooperative Game}

Here, we describe the cooperative game, define the core, and characterize the set of core allocations.

 Consider a cooperative game consisting of the same sets of players as the original non-cooperative game, where the distribution of worker types remains the same. The definition of an allocation is also the same as before. Corresponding to the monotonicity requirement of wages in the noncooperative games, we assume that, for any allocation, $w_i(v)\geq w_i(v')$ if $v\geq v'$ and $f_i(v)>0$.

Under an allocation for firm $i$, $A_i:=\big\{(f_i(v),w_i(v))\big\}_{v \in [0,1]}$, $i$ receives profit 
$$
\pi^{A_i}_i:=\int_0^1 \big[v-w_i(v)\big]f_i(v)dv.
$$

An allocation is \emph{blocked} by a set of workers and a firm if there is an alternative wage schedule for a subset of workers such that both the firm and each worker in the subset obtain a higher payoff than in the present allocation. Formally,
we say that 
an allocation $A:=\big\{(f_i(v),w_i(v))\big\}_{v \in [0,1], i =1,2}$ is \emph{blocked} by firm $j$ via an alternative allocation (for $j$)    $\tilde A_j:=\big\{(\tilde f_j(v),\tilde w_j(v))\big\}_{v \in [0,1]}$ if $\pi^{\tilde A_j}_j > \pi_j^{A_j}$ and, for almost all $v \in [0,1]$, one of the following conditions hold (note that, because we define $\tilde A_j$ to be an allocation, it must satisfy all restrictions imposed on an allocation in addition to those listed below):
\begin{enumerate}
\item $\tilde w_j(v) \ge  w_j(v)$ and $\tilde w_j(v) >  w_{-j}(v)$,
\item $\tilde w_j(v) \ge  w_j(v)$ and $\tilde f_j(v)+f_{-j}(v) \le f(v)$, 
\item $\tilde w_j(v) >  w_{-j}(v)$ and $\tilde f_j(v)+f_{j}(v) \le f(v)$, or
\item $\tilde f_j(v)+f_j(v)+f_{-j}(v) \le f(v)$.
\end{enumerate}

Intuitively, Condition 1 states a ``no wage cuts'' requirement; if a firm $j$ weakly raises the wages of all workers involved, and strictly raises wages for workers employed by the other firm, then these workers are all willing to work for $j$. Condition 2 considers the case in which firm $j$ does not need to poach workers from firm $-j$ to construct the blocking allocation, so the only constraint on  wages is that existing workers' wages are not reduced. Condition 3 considers the case in which firm $j$ does not need to keep any existing workers to construct the blocking allocation, so the only restriction  on  wages is that the wage paid to poached workers is higher than those paid by $-j$ to the same workers. Condition 4 considers the case in which firm $j$ can hire from unemployed workers to construct the blocking allocation, so there is no restriction on the wage for these workers.

An allocation $A$ is said to be a \emph{core allocation} if there exists no firm $j$ and   alternative allocation $\tilde A_j$ for  $j$ that blocks $A.$ 

\begin{proposition}\label{core and Bertrand result}
    In the cooperative game, the set of core allocations coincides with the set of Bertrand allocations. 
\end{proposition}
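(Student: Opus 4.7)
The plan is to prove the two inclusions separately.

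For the easy direction (every Bertrand allocation is in the core), fix a Bertrand allocation $A$---so $\pi^{A_j}_j = 0$ for both $j$---and suppose for contradiction that firm $j$ blocks via $\tilde A_j$. It suffices to show $\tilde w_j(v) \geq v$ almost everywhere on $\{\tilde f_j > 0\}$, which forces $\pi^{\tilde A_j}_j \leq 0$ and contradicts the strict improvement required for blocking. I would verify this by exhausting the four blocking conditions at a given $v$ with $\tilde f_j(v) > 0$: because $f_1(v) + f_2(v) = f(v)$ almost everywhere in $A$, Condition~4 reduces to $\tilde f_j(v) \leq 0$ and hence $\tilde f_j(v) = 0$; Conditions~1--3 each carry a wage inequality ($\tilde w_j \geq w_j$ or $\tilde w_j > w_{-j}$) that, combined with the Bertrand fact that whichever firm hires at $v$ pays exactly $v$ and the attendant feasibility inequality, forces $\tilde w_j(v) \geq v$.

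For the hard direction (every core allocation is Bertrand), let $A$ be a core allocation. I would first show \emph{no overpayment}: $w_i(v) \leq v$ almost everywhere on $\{f_i > 0\}$. If not, on a positive-measure set $E$ firm $i$ blocks by firing the overpaid workers: set $\tilde f_i = f_i$ off $E$ and $\tilde f_i = 0$ on $E$, with $\tilde w_i$ inherited from $w_i$. This preserves monotonicity, satisfies Condition~2 off $E$ and Condition~4 on $E$, and strictly raises profit by excising the negative-margin workers.

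I would then show \emph{full employment} ($f_1 + f_2 = f$ a.e.) and \emph{marginal-product wages} ($w_i(v) = v$ a.e.\ on $\{f_i > 0\}$) by constructing a blocking deviation whenever either fails. If unemployment $U := f - f_1 - f_2$ is positive on a positive-measure set, an appropriately chosen firm $j$ blocks by augmenting its hires with unemployed workers: either via Condition~2 with $\tilde f_j = f_j + U$ and $\tilde w_j$ the tightest admissible monotone extension of $w_j$ to the enlarged support, or, if that extension is pinned too high by monotonicity, via Condition~4 with $\tilde w_j \equiv 0$ and $\tilde f_j$ supported on $\{U > 0\}$. If underpayment $w_i(v) < v$ persists on a positive-measure subset of $\{f_i > 0\}$, firm $-i$ blocks by poaching those workers via Condition~1 or~3 at a wage given by the non-decreasing envelope of $w_i + \epsilon$, which for small $\epsilon > 0$ stays strictly below $v$ (hence profitable) and strictly above $w_i$ (hence beats the incumbent).

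The main obstacle is the monotonicity constraint on the deviating firm's wage schedule. Without it a per-$v$ Bertrand argument is immediate: any slack or underpayment at $v$ admits a strictly improving deviation at that $v$ alone. Under monotonicity, however, hiring or poaching at a low $v$ may force wage increases at higher productivities within the same firm, potentially eroding the per-$v$ gain; in pathological cases the envelope extension of $w_j$ can pin $\tilde w_j(v)$ all the way up to $v$ on the target set. The proof therefore calls for the exhaustive case analysis flagged in the introduction---choosing the deviating firm, the targeted range of productivities, and whether to rely on Conditions~2, 3, or~4---so that the envelope-based ``monotonicity cost'' is strictly less than the surplus absorbed from the unemployed or underpaid workers.
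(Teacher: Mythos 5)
Your high-level plan is faithful to the paper's: Bertrand $\Rightarrow$ core follows by exhausting the four blocking conditions to show the deviator's integrand is everywhere nonpositive, and core $\Rightarrow$ Bertrand requires constructing a blocking deviation in each failure mode, with the monotonicity constraint as the central difficulty. The no-overpayment step, the easy direction, and the ``hire the unemployed at the monotone extension of $w_j$'' step for slack where $w_j(v)<v$ are all essentially what the paper does.

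However, there is a genuine gap in the hard direction, precisely where you flag ``pathological cases.'' Consider slack at productivities where the hiring firm already pays full productivity, i.e.\ $f_1(v)+f_2(v)<f(v)$ and $w_j(v)=v$ on a positive-measure set $V$. Your first alternative (hire the unemployed at the tightest monotone extension of $w_j$) yields a replacement wage pinned at $v$ there, hence zero marginal profit. Your second alternative ($\tilde w_j \equiv 0$ with $\tilde f_j$ supported on $\{U>0\}$) discards every inframarginal worker firm $j$ currently holds, and nothing guarantees the new profit exceeds the old. Neither deviation works in general, and no intermediate option is offered. The paper's resolution (\Cref{case 3 remark}) is the step you are missing: apply the Lebesgue density theorem to find a short interval $I=[\underline v,\bar v]$ on which $V$ occupies more than a fraction $p$ of the mass, fire \emph{only} the workers in $I$, rehire unemployed replacements there at wage $\sup_{v'<\underline v} w_j(v')\leq \underline v$, and keep everything outside $I$ unchanged. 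The gain from cheap replacements on $I\cap V'$ is quadratic-plus-linear in $\mu(I)$ with coefficients bounded below by $\underline f$ and the slack $\varepsilon$, while the loss from firing the good workers in $I\setminus V'$ scales with $1-p$; for $p$ near $1$ the gain strictly dominates. This quantitative balancing is the core of the argument and cannot be elided. A similar issue recurs in the full-employment-with-underpayment case: the ``non-decreasing envelope of $w_i+\epsilon$'' that the poaching firm $-i$ must adopt also raises its own wages at productivities above the targeted interval, so the per-$v$ margin on the poached workers does not by itself establish a profit increase. The paper handles this by a sub-case split on whether $-j$ already has cheap workers in the tail $[\bar v^p,1]$, another density-point interval, and explicit bounds showing the forced raises cost at most $\varepsilon'$; your sketch asserts the conclusion without this accounting. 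Finally, the case where $-j$ underpays and $j$ pays $v$ exactly needs yet another device (a compactness argument to locate an interval $[v_i'-\delta,v_i']$ on which poaching at wage $v_i'-\delta$ beats $-j$ and remains monotone for $j$), which your sketch does not anticipate.
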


It is straightforward to show that any Bertrand allocation is a core allocation: no firm $j$ can hire unemployed workers (since $f_1(v)+f_{2}(v)=f(v)$ for almost all $v$) nor can it poach workers from the competing firm without incurring a loss (since $w_{-j}(v)=v$ for almost all $v$ such that $f_{-j}(v)>0$, and Conditions 1 and 3 of the definition of block require poached workers to earn strictly more than they were at firm $-j$). 

 It is more complicated to show that any core allocation is a Bertrand allocation. To do so, we consider any non-Bertrand allocation which necessarily must contain a positive measure set of worker productivities $V$ which satisfy at least one of the following six conditions: (i) workers with $v\in V$ are paid more than their productivities, (ii) workers with $v\in V$ are not all hired and those who are receive pay strictly less than their productivities, (iii) workers with $v\in V$ are not all hired and those who are receive pay equal to their productivities,  (iv) workers with $v\in V$ are all hired by the same firm and are paid strictly less than their productivities, (v) some workers of each productivity $v\in V$ are hired by each firm and none receive pay equal to productivity, and (vi) some workers of each productivity $v\in V$ are hired by each firm and only one firm pays these workers wages equal to productivity. In each of these cases, we construct a blocking coalition.

Even though the above six cases may appear out of blue at first glance, this division follows an interpretable economic reasoning as follows. The first case implies an \emph{individual rationality condition} for firms; almost no worker is paid more than her productivity, for otherwise the employing firm could fire such workers and increase its profit. The second and third cases combine to show that there are no demand restrictions; if the firms collectively do not hire all of the workers, one of the firms can increase its profit by hiring unemployed workers at the lowest possible wage that does not violate the monotonicity constraint. The fourth and fifth cases combine to show that  at least one of the firms must pay a wages equal to productivity for almost all $v$, for otherwise one of the firms can ``out compete'' the other for such workers. The final case shows that it cannot be the case that only one firm pays wages equal to productivity but fails to hire all such workers, for otherwise this firm could fire the workers it employs (from whom it earns zero profit) and ``poach'' workers of the same productivity from the other firm at marginally higher wages.

Of course, the main complication throughout our proof is that the blocking coalitions must be constructed to preserve wage monotonicity. To broadly demonstrate our approach, we now present the argument for the third case, which is relatively simple but showcases some of the main techniques used in the more involved cases.

\begin{remark}\label{case 3 remark}  Consider any allocation $A$ in which $w_i(v)\leq v$ for all $v$ and all $i$, and in which there exists a firm $j$ and a subset of productivities $V$ with positive (Lebesgue) measure such that 
 $f_1(v)+f_2(v) < f(v)$ and $w_j(v)= v$ for all $v\in V$. Then $A$ is not a core allocation.
 \end{remark}

 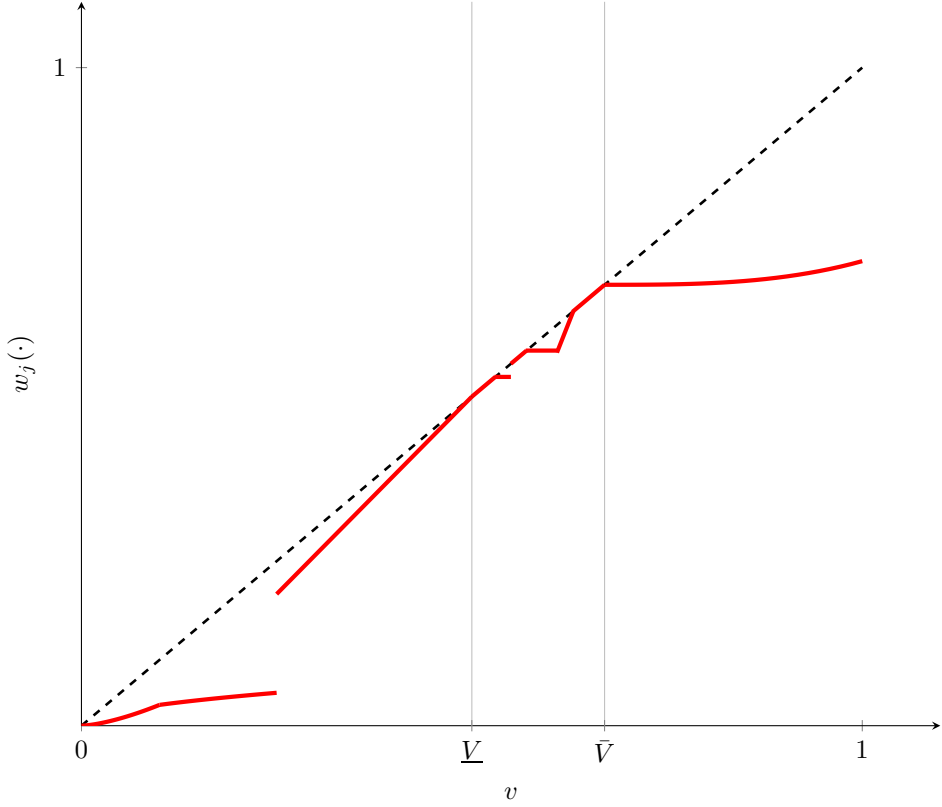
\begin{figure}
     \centering
\begin{tikzpicture}
\begin{axis}[
    xtick={0,1},
    ytick={1},
    ymax=1.1,
    xmax=1.1,
    axis lines=left,
    xlabel={$v$},
    extra x ticks={.5,.67}, extra x tick labels={$\underline V$, $\bar V$},
extra x tick style={grid=major},
    ylabel={$w_j(\cdot)$},
    legend style={at={(1.7,.82)},anchor=west},
]
    \addplot[name path=B, dashed, line width=1pt, domain=0:1] {x};
    
    \addplot[red, ultra thick, domain=0:0.1] {x^(1.5)};          
       \addplot[red, ultra thick, domain=.1:0.25] {.1*x^.5};  
    \addplot[red, ultra thick, domain=0.25:0.5] {.2+1.2*(x-.25)};    
    \addplot[red, ultra thick, domain=0.5:0.53] {x};        
    \addplot[red, ultra thick, domain=0.529:0.55] {.53};  
    \addplot[red, ultra thick, domain=0.55:0.57] {x};   
    \addplot[red, ultra thick, domain=0.569:0.61] {.57};  
    \addplot[red, ultra thick, domain=0.609:0.63] {3*x-1.26};
        \addplot[red, ultra thick, domain=0.629:0.67] {x};
    \addplot[red, ultra thick, domain=0.67:1] {0.67+(x-.67)^3};      
\end{axis}
\end{tikzpicture}
     \caption{Wage function described in \Cref{case 3 remark}}
     \label{case 3 figure}
 \end{figure}

The red curve in \Cref{case 3 figure} depicts a wage function satisfying the conditions in \Cref{case 3 remark}, where $V\subset[\underline V, \bar V]$. Naively, firm $j$ may consider firing all workers in set $V$ who earn their full productivity in wage, and replacing them with workers of the same productivity who were previously unemployed at the minimum wage of 0 (i.e. a blocking outcome $\tilde A_j$ such that $\tilde w_j(v)=0$ and $\tilde f_j(v)=f(v)-f_1(v)-f_2(v)$ for all $v\in V$). But this is not possible, because of the monotonicity constraint, without also firing all workers with productivity strictly less than $\underline V$, which could possibly result in lower overall profits. Instead, could the firm fire all existing employees with productivity  $v\in [\underline V, \bar V]$ and replace them with workers of the same productivity by paying a wage equal to $\underline V$ (i.e. a blocking outcome $\tilde A_j$ such that $\tilde w_j(v)=\underline V$ and $\tilde f_j(v)=f(v)-f_1(v)-f_2(v)$ for all $v\in [\underline V, \bar V]$)? Doing so would satisfy monotonicity, but would also result in the loss of profit from workers of productivity $v\in [\underline V, \bar V]$ who earn strictly less than their productivity. Indeed, such losses in profit may be unavoidable, as there may not exist an interval of productivities such that all workers within the interval earn wages equal to their productivity---recall that $V$ is only required to be measurable, and the measurability of $V$ does not imply the existence of an interval subset of $V$.

 Our proof below constructs a blocking outcome where workers within a certain interval $I:=[\underline v, \bar v]$ of productivities are fired, such that any losses from the fired workers previously earning less than their productivity within this interval are small relative to the gain in profit from ``poaching'' from unemployment all available workers with productivity $v\in I$ and paying them a common wage of no more than $\underline v.$ Our argument, including the proof that an interval $I$ with the desired property exists, is formalized below.

 \begin{proof}
 First, note that there exists $\varepsilon>0$ and $V'\subset [0,1]$ with positive measure such that $f_1(v)+f_2(v) < f(v)-\varepsilon$ and $w_j(v)= v$ for all $v \in V'$.\footnote{\label{proof_footnote}The proof for this claim is as follows: Suppose for contradiction that for each $\varepsilon$, any set of productivities such that $f_1(v)+f_2(v) < f(v)-\varepsilon$ and $w_j  (v)= v$ has zero measure. Then, for each $n =1,2,\dots$ define the set $V_n:=\{v \in V|   f (v)-f_1(v)-f_2(v)>\frac{1}{n} \text{ and } w_j  (v)= v$\}. Then, by assumption,   $V_1,V_2,...$ is an increasing sequence of sets and $\bigcup_{n} V_n=V^*:=\{v \in V|   f (v)-f_1(v)-f_2(v)>0 \text{ and } w_j  (v)= v\}$. Therefore, by countable additivity of the Lebesgue measure, we have $\mu (V^*)=\lim_n \mu (V_n)=0$, which contradicts the assumption that $V$ has positive measure and the fact that $V \subseteq V^*$.} 
 Now, arbitrarily fix $p<1$ such that $(1-p)\bar f \le p \varepsilon$ and $\frac{1}{2} p^2 \varepsilon>   (1-p)\bar f$ (note that those inequalities are satisfied by any sufficiently large  $p<1$). By \citet[Theorem A, Page 68]{Halmos}, there exists an interval $I:=[\underline v,\bar v] \subseteq [0,1]$ such that $\mu(V' \cap I)>p \mu(I)$, where $\mu(\cdot)$ is the Lebesgue measure. Then $\tilde A_j$ where for all $v$:

\begin{align*}
\hspace{-2mm}
\tilde f_j(v) & := \begin{cases} 
 f (v)-f_j(v)-f_{-j}(v) & \text{ if } v \in I,  \\
 f_j(v) & \text{ otherwise.}
\end{cases}
~~~~~~ \tilde w_j  (v) := \begin{cases} 
0 & \text{if } \tilde f_j(v)=0, \\
\underset{v' < \underline v}{\sup~} w_{j}  (v') & \text{if } v \in I \text{ and } \tilde f_j(v)>0,  \\
w_{j}  (v) & \text{otherwise.}
\end{cases}
\end{align*}

\noindent blocks $A$. To see this, note Condition 4 of the definition of block is satisfied for all $v\in I$ (firm $j$ fires all existing workers in set $I$ and hires unemployed workers of the same productivity), and Condition 2 of the definition of block is satisfied for all $v\notin I$ (no worker receives a wage cut and no workers are poached from firm $-j$). By construction, $\tilde w_j  (v)$ satisfies our monotonicity condition. Therefore, it remains only to show that $j$'s profit is higher under $\tilde A_j$.

To see that firm $j$'s profit increases, let $\delta:=\underline v-\sup_{v \le \underline v} w  _j(v)$. $j$ makes an additional profit of at least $$\delta p \mu(I) \varepsilon+ \frac{1}{2} \big(p\mu(I)\big)^2 \varepsilon,$$  from hiring workers from set $V' \cap I$ while firing existing workers from $V' \cap I$ causes no loss (because those workers were hired at wages equal to their productivities), and the loss from losing workers from $I \setminus V'$ is bounded from above by $\big[(1-p) \mu(I) \times (\delta + \mu(I))\big]\bar f=  (1-p)\bar f \delta \mu(I)+(1-p)\bar f \mu(I)^2$.  Because $p$ satisfies $(1-p)\bar f \le p \varepsilon$ and $\frac{1}{2} p^2 \varepsilon>   (1-p)\bar f$ by assumption, the total change in $j$'s payoff is strictly positive, as desired.
\end{proof}

As previously mentioned, the remaining cases are addressed in the appendix.

\subsection*{Step 2: Equivalence between Core and Equilibrium Allocations}

Next, we show that the set of core allocations of the cooperative game is equivalent to the set of equilibrium allocations of the non-cooperative game.

\begin{proposition}\label{equivalence result}
    An allocation $A$ can be induced by an equilibrium of the non-cooperative game if and only if it is a core allocation of the cooperative game.
\end{proposition}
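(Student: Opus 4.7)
The plan is to prove both directions of the equivalence by translating the cooperative blocking structure into non-cooperative deviations.

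For the (core $\Rightarrow$ equilibrium) direction, I would invoke \Cref{core and Bertrand result} to replace ``core'' by ``Bertrand'' and then exhibit an explicit equilibrium inducing $A$. Consider the strategy profile in which each firm $i$ offers $\mathcal{w}_i(v) = v$ to every worker (so $\mathcal{S}_i = [0,1]^2$), and each worker $(v,\ell)$ accepts the strictly higher wage or, under a tie, follows a deterministic rule on $\ell$ arranged so that the hire densities $(f_1(v), f_2(v))$ match those in $A$. This induces $A$, workers clearly play optimally in every subgame, and no firm can profitably deviate: by monotonicity of any deviating wage schedule, at each $v$ an offer strictly below $v$ yields no hires there (no profit change) and an offer strictly above $v$ yields a negative per-worker margin, so the deviation profit is bounded above by the equilibrium profit of zero.

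For the (equilibrium $\Rightarrow$ core) direction, I proceed by contrapositive. Suppose $A$ is not a core allocation, so there exist a firm $j$ and a blocking allocation $\tilde A_j$ with $\pi^{\tilde A_j}_j > \pi^{A_j}_j$. Fix any strategy profile inducing $A$ and let $\mathcal{w}_{-j}$ denote firm $-j$'s equilibrium offer schedule. I will have $j$ deviate to the wage schedule $\tilde{\mathcal{w}}_j(v) := \max\bigl(\tilde w_j(v),\,\mathcal{w}_{-j}(v)\bigr) + \delta$ for small $\delta > 0$, posted on a measurable set $\tilde{\mathcal{S}}_j$ of density $\tilde f_j(v)$ at each $v$, chosen to be disjoint from $\mathcal{S}_{-j}$ whenever possible. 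Monotonicity of $\tilde{\mathcal{w}}_j$ is inherited from the monotonicity of its two components, and every worker in $\tilde{\mathcal{S}}_j$ strictly prefers firm $j$, so the deviation realizes the target hire densities $\tilde f_j$.

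The main obstacle will be to verify that the deviation profit strictly exceeds $\pi^{A_j}_j$, not merely approximates $\pi^{\tilde A_j}_j$. The gap $\pi^{\tilde A_j}_j - \pi_j'$ equals a trivial $O(\delta)$ overhead plus an ``excess payment'' $\int_{\tilde w_j < \mathcal{w}_{-j}} (\mathcal{w}_{-j} - \tilde w_j)\tilde f_j\,dv$ arising when $\mathcal{w}_{-j}$ exceeds $\tilde w_j$. I plan to control the excess term via case analysis on the four blocking conditions together with two equilibrium observations: (i) one may take $\mathcal{w}_{-j}(v) \le v$ almost everywhere on $\mathcal{S}_{-j}$, since otherwise $-j$ either loses money on accepted offers (and would withdraw them) or posts unaccepted offers that can be trimmed without payoff change; and (ii) at any $v$ where $\mathcal{S}_j$ and $\mathcal{S}_{-j}$ overlap with $j$ hiring, $\mathcal{w}_j(v) \ge \mathcal{w}_{-j}(v)$ because $j$'s workers chose $j$ over $-j$. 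Under Conditions~1 and~2, these observations give $\tilde w_j(v) \ge w_j(v) \ge \mathcal{w}_{-j}(v)$ so the excess term vanishes; under Conditions~3 and~4, the blocking bound on $\tilde f_j(v)$ combined with the equilibrium fact that workers in $\mathcal{S}_{-j}\setminus S_{-j}$ with $\mathcal{w}_{-j}>0$ are absorbed into $S_j$ leaves enough mass outside $\mathcal{S}_{-j}$ at $v$ to accommodate $\tilde f_j(v)$ without any overlap, again eliminating the excess term. Taking $\delta$ small then yields $\pi_j' > \pi^{A_j}_j$, contradicting that $A$ is an equilibrium allocation.
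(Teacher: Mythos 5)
Your overall strategy---proving the equilibrium-to-core direction by contrapositive and turning a cooperative blocking allocation $\tilde A_j$ into a non-cooperative deviation---matches the paper's approach, and your first direction is fine. However, the deviation wage schedule you write down, $\tilde{\mathcal{w}}_j(v) := \max\bigl(\tilde w_j(v),\,\mathcal{w}_{-j}(v)\bigr) + \delta$, has a genuine problem in the profit accounting, and the case analysis you offer to eliminate the ``excess payment'' term does not actually close it.

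The chain you invoke under Conditions 1 and 2, $\tilde w_j(v) \ge w_j(v) \ge \mathcal{w}_{-j}(v)$, fails precisely at productivities where $f_j(v)=0$: there $w_j(v)=0$ by convention, yet $\mathcal{w}_{-j}(v)$ can be strictly positive (indeed $\mathcal{w}_{-j}(v)=w_{-j}(v)$ whenever $f_{-j}(v)>0$). Under Condition 2 the block imposes no relation between $\tilde w_j(v)$ and $w_{-j}(v)$, so $\tilde w_j(v)$ may be far below $\mathcal{w}_{-j}(v)$; your $\max$ then forces $j$ to post $\mathcal{w}_{-j}(v)+\delta$, and the excess term need not vanish. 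This is not a degenerate corner case---it arises exactly in Case~2 of the appendix, where the block sets $\tilde f_j(v)=f(v)-f_{-j}(v)$ and $\tilde w_j(v)=\sup_{v'\le v}w_j(v')$ at productivities where $j$ initially hires nobody. A related problem is your claim that monotonicity of $\tilde{\mathcal{w}}_j$ is ``inherited from the monotonicity of its two components'': $\mathcal{w}_{-j}$ is only defined on $\mathcal{S}_{-j}$, so the pointwise max (extending by $0$ off $\mathcal{S}_{-j}$) is not the max of two monotone functions and need not be monotone unless you have already shown the max equals $\tilde w_j$ everywhere on the support---which is the thing in question.

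The repair is to drop the $\max$ and simply post $\tilde w_j(v)+\delta$, and to argue that the targeted workers accept by tracking \emph{which} workers $j$ contacts rather than which productivities: own retained workers (where Condition 1 or 2 gives $\tilde w_j(v)\ge w_j(v)=\mathcal{w}_j(v)\ge\mathcal{w}_{-j}(v)$ for those who were offered by both firms), poached workers (where Conditions 1 or 3 give $\tilde w_j(v)>w_{-j}(v)=\mathcal{w}_{-j}(v)$), and unemployed workers (who in any equilibrium either receive no offer from $-j$ or one of wage $0$). Equivalently, one can construct offers that mirror $\tilde A_j$ exactly and then withdraw any offers the targeted worker would decline, which is what the paper does: its deviation posts $\tilde w_j$ itself, and the ``withhold declined offers'' step guarantees that offered and accepted wages coincide, so monotonicity of $\tilde w_j$ over the support of $\tilde f_j$ carries over to offered wages automatically. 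That sidesteps the entire excess-payment bookkeeping you set up.
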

Recall that \Cref{core and Bertrand result} finds that the core of the cooperative game coincides with the set of Bertrand allocations. Therefore, it suffices to show that the set of equilibrium allocations in the non-cooperative game coincides with the set of Bertrand allocations. Without the wage monotonicity constraint, it would be straightforward to establish this result, adopting the original Bertrand argument. 
However, the  monotonicity constraint makes it harder to establish that non-Bertrand allocations cannot be induced by an equilibrium. Specifically, one challenge is that our non-cooperative game requires monotonicity  of wage offers, while the cooperative game imposes monotonicity  on the final wage schedule. The proof below establishes that one can construct a profitable deviation strategy based on a block in the cooperative game which  satisfies the original monotonicity condition.
\begin{proof}
It is easy to see that any Bertrand allocation can be induced by an equilibrium: if both firms offer all workers wages equal to their productivity, then no deviation can increase either firm's profits, given the strategy profile of other agents, as the only way to hire additional workers is to pay them more than their productivity.

To show the complementary direction, consider a strategy profile in the non-cooperative game that induces a non-Bertrand allocation. One can construct a deviation strategy $\sigma_j$ by some firm $j$ in the non-cooperative game that induces (when all others play their original strategies) allocation $A_j^{\sigma_j}:=\big \{(f^{\sigma_j}_j(v),w^{\sigma_j}_j(v))\big\}_{v \in [0,1]}$ for firm $j$, where $A_j^{\sigma_j}$ is the blocking allocation constructed in one of the six cases presented in the proof of \Cref{core and Bertrand result}. Formally, this follows from minor modifications of the proofs of Propositions 7 and 8 in \cite{Gentile2024}. However, recall that the monotonicity constraint in the cooperative game only binds on the final allocation, not on the initial offers; that is, $w^{\sigma_j}_j(\cdot)$---the wages of \emph{employed} workers at firm $j$ according to allocation $A_j^{\sigma_j}$---satisfy monotonicity. Therefore, the remaining concern is that the \emph{offered} wages according to $\sigma_j$ may not be monotonic. If offered wages according to $\sigma_j$ are non-monotonic, consider an alternative deviation strategy  $\sigma'_j$ for firm $j$ that coincides with $\sigma_j$ except that $j$ simply withholds making offers to workers $(v,\i)$ who receive offers from firm $j$ according to $\sigma_j$ but decline it (according to the specified worker strategies). By construction, this again induces the same allocation, i.e.  $A_j^{\sigma'_j}:=\big\{(f^{\sigma'_j}_j(v),w^{\sigma'_j}_j(v))\big\}_{v \in [0,1]}=A_j^{\sigma_j}$. Therefore, for any $v$ such that $f^{\sigma_j}_j(v)=f^{\sigma'_j}(v)>0$, $\mathcal{w}^{\sigma'_j}_j(v)=w^{\sigma'_j}_j(v)=w^{\sigma_j}_j(v)$, where the first equality follows because no offers are rejected, and the second inequality follows because $\sigma'_j$ induces the same allocation as does $\sigma_j$. Because $w^{\sigma_j}_j(\cdot)$ is monotonic, so is $\mathcal{w}^{\sigma'_j}_j(\cdot).$

\end{proof}

For clarity, we construct  a deviation strategy corresponding to the case presented in \Cref{case 3 remark}: Consider a deviation $\sigma'_j$ 
by firm $j$ such that it makes offers to workers in set
$\left(\big[ I^\complement \times [0,1] \big] \cap S_{j}\right) \cup  \left(\big[ I \times [0,1]\big] \cap \big[S_{j} \cup S_{-j}\big]^\complement\right)$, and the wage offer is given by $\tilde w_j(\cdot)$. This deviation clearly results in allocation $\tilde A_j$ for firm $j$, and the set of workers hired is  measurable.\footnote{ To see that the set of workers hired is measurable, note that $I$ is measurable by assumption, and taking a countable number of unions, intersections, and complements of measurable sets results in a measurable set.} Therefore,  $j$'s profit strictly increases, as shown in the proof of \Cref{case 3 remark}, proving the original allocation $A$ cannot be induced by an equilibrium.

\section{Discussion}\label{robust}
We show that multi-product ``menu'' pricing under a monotonicity constraint leads to the familiar Bertrand outcome that competition erodes firm profits. As described in the Introduction, monotone pricing is a relevant constraint in certain markets affected by adverse selection, moral hazard, and legal constraints. Therefore, our paper supports the common, often unmodeled, assumption that prices must equal marginal cost (or marginal productivity in labor-market settings) under competition.

Our finding is robust to a number of model alterations:
\begin{itemize}
    \item Our model allows firms to restrict the set of worker types who receive wage offers, and to ``ration'' wage offers to different types. The proof of \Cref{core and Bertrand result} obtains if we assume firms are required to offer all workers contracts, implying that our result holds if firms can only compete over wages.
    \item \Cref{core and Bertrand result} shows that our main result also holds in a cooperative game,  suggesting that details of offer timing do not drive our finding.
    \item The central contribution of \Cref{equivalence result} is to show that we can obtain any allocation with monotonic \emph{accepted} wages via a strategy that makes monotonic wage \emph{offers}. Because only workers who receive wage offers are eligible to work at that firm, if a firm's wage offers are monotonic, so too are the accepted offers. Therefore, our main result holds if we instead require wage monotonicity among employed workers. 
    \item Our result extends in the usual way when there are more than two firms.    
\end{itemize}
Overall, our main result---together with its robustness to various modeling specifications---suggests that the conclusion of the original Bertrand model holds in a wide variety of environments. We therefore view this paper as providing justification for the pervasive use of the Bertrand prediction as a building block for richer models designed to answer new economic questions.

\newpage
\appendix

\section*{Appendix:
Proof of \Cref{core and Bertrand result}}

\begin{proof}

Consider any Bertrand allocation $A=\big\{(f_i(v),w_i(v))\big\}_{v \in [0,1], i =1,2}$. Then the following hold for almost all $v\in[0,1]$:
\begin{enumerate}
    \item[B1] $f_1(v)+f_2(v) = f(v)$, and 
    \item[B2] for all $i\in \{1,2\}$, $w_i(v)=v$ if $f_i(v)>0$.
\end{enumerate}
We establish the desired result through two lemmas regarding these enumerated conditions. The first lemma shows that any allocation satisfying B1 and B2 is a core allocation by showing that any potential blocking coalition (i.e. satisfying Conditions 1-4 of the definition of block) fails to raise the profit of the indicated firm. The second lemma shows that any allocation failing to satisfy either B1 or B2 admits a blocking coalition. We do so by considering six exhaustive cases, and in each, we construct a block. 

\begin{lemma}\label{noepsw_lemma1}
Any allocation $A$ satisfying B1 and B2 is a core allocation.
\end{lemma}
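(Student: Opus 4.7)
The plan is a direct contradiction argument exploiting the fact that a Bertrand allocation yields zero profit for every firm, so any putative block must simultaneously (i) generate strictly positive profit and (ii) respect the constraints embedded in the four conditions of the definition of block. My goal is to show (i) and (ii) are incompatible when $A$ satisfies B1 and B2.

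First I would compute the baseline profit: since by B2, $w_j(v)=v$ whenever $f_j(v)>0$, the integrand defining $\pi^{A_j}_j$ vanishes on the support of $f_j$, so $\pi^{A_j}_j = 0$. Consequently, if $\tilde A_j$ blocks $A$, it must satisfy $\int_0^1 [v-\tilde w_j(v)] \tilde f_j(v)\, dv > 0$, which in turn requires a positive-measure set of $v$ where $\tilde f_j(v)>0$ and $\tilde w_j(v) < v$.

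The core of the argument is then to show that the block conditions rule this out. Fix any $v$ for which $\tilde f_j(v)>0$ and for which one of the four conditions of the block must hold (this is almost every relevant $v$). I would split into sub-cases according to whether $f_j(v)>0$ and whether $f_{-j}(v)>0$, noting that by B1 at least one of these holds for a.e.\ $v$. If $f_j(v)>0$ then by B2, $w_j(v)=v$, so Conditions 1 and 2 directly force $\tilde w_j(v)\geq v$; meanwhile, B1 gives $f(v)-f_j(v)=f_{-j}(v)$, so the feasibility clauses of Conditions 3 and 4 reduce to $\tilde f_j(v)\le f_{-j}(v)$ (resp.\ $\le 0$). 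In the sub-case $f_{-j}(v)>0$, Conditions 1 and 3 invoke $w_{-j}(v)=v$ and therefore demand $\tilde w_j(v)>v$; the symmetric sub-case where only $f_{-j}(v)>0$ is handled identically. Whenever the feasibility clause of Condition 4 (or the feasibility clause in Condition 3 when $f_j(v)=f(v)$) forces $\tilde f_j(v)=0$, that $v$ contributes nothing to $\tilde A_j$'s profit and is irrelevant. A quick bookkeeping confirms that in every sub-case where $\tilde f_j(v)>0$, at least one of the applicable conditions implies $\tilde w_j(v)\ge v$.

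Combining the pointwise inequality $\tilde w_j(v)\ge v$ on the support of $\tilde f_j$, I conclude $\int_0^1 [v-\tilde w_j(v)]\tilde f_j(v)\, dv \le 0 = \pi^{A_j}_j$, contradicting the assumption that $\tilde A_j$ blocks $A$. Hence no blocking allocation exists and $A$ lies in the core. I do not anticipate a serious obstacle here; the only mild subtlety is the careful enumeration of sub-cases to verify that the feasibility clauses in Conditions 2--4 either force $\tilde f_j(v)=0$ (contributing nothing) or leave the wage-floor clauses from B2 binding. The argument is essentially a Bertrand-style undercutting impossibility, adapted to the cooperative block definition.
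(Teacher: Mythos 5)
Your proposal is correct and matches the paper's proof in substance: both compute $\pi^{A_j}_j=0$ from B2, then argue that the four block conditions pointwise force either $\tilde f_j(v)=0$ or $\tilde w_j(v)\ge v$ (using B1 to guarantee $\max\{w_j(v),w_{-j}(v)\}=v$ almost everywhere), so $\pi^{\tilde A_j}_j\le 0$, contradicting the blocking inequality. The only difference is cosmetic: you organize the sub-cases by whether $f_j(v)$ or $f_{-j}(v)$ is positive, whereas the paper enumerates directly over the four block conditions.
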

\begin{proof}[Proof of \Cref{noepsw_lemma1}]
 Suppose not for the sake of contradiction. Then there are a firm $j$ and a distinct allocation (for firm $j$)  $\tilde A_j:=\big\{(\tilde f_j(v),\tilde w_j(v))\big\}_{v \in [0,1]}$ that blocks $A.$ In order for $\tilde A_j$ to block $A$ it must be that $\pi^{\tilde A_j}_j > \pi_j^{A_j}$. However, 

\begin{align*}
     \pi^{A_j}_j= \int_0^1 \big[v-w_j(v)\big]f_j(v)dv= 0\geq  \int_0^1 \big[v-\tilde w_j(v)\big]\tilde f_j(v)dv=  \pi^{\tilde {A}_j}_j.
    \end{align*}

\noindent The second equality follows because, by the construction of $A$, either $f_j(v)=0$ or $w_j(v)=v$ for almost all $v$, therefore, the integrand almost always equals zero. The inequality follows because of the following exhaustive cases for almost all $v$, corresponding, respectively, to Conditions 1-4 of the definition of block:

\begin{itemize}
\item Suppose $\tilde w_j(v) \ge  w_j(v)$ and $\tilde w_j(v) >  w_{-j}(v)$, then it must be that $\tilde w_j(v) \ge v$ since $\max\{w_j(v),w_{-j}(v)\}=v$, which makes the integrand weakly negative,

\item Suppose $\tilde w_j(v) \ge  w_j(v)$ and $\tilde f_j(v)+f_{-j}(v) \le f(v)$. If $\tilde f_j(v)=0$ then the integrand is weakly negative. If $\tilde f_j(v)>0$ then it must be that $f_{-j}(v) < f(v)$, and by the construction of $A$ that  $f_j(v)+f_{-j}(v) = f(v)$, it must be that $f_j(v)>0.$ Therefore, it must be that $ w_j(v)=v$, and the requirement that $\tilde w_j(v) \ge  w_j(v)$ makes the integrand weakly negative.

\item Suppose $\tilde w_j(v) >  w_{-j}(v)$ and $\tilde f_j(v)+f_{j}(v) \le f(v)$. If $\tilde f_j(v)=0$ then the integrand is weakly negative. If $\tilde f_j(v)>0$ then it must be that $f_{j}(v) < f(v)$, and by the construction of $A$ that  $f_j(v)+f_{-j}(v) = f(v)$, it must be that $f_{-j}(v)>0.$ Therefore, it must be that $w_{-j}(v)=v$, and the requirement that $\tilde w_j(v) >  w_{-j}(v)$ makes the integrand strictly negative.

\item Suppose $\tilde f_j(v)+f_j(v)+f_{-j}(v) \le f (v)$ then it must be that $\tilde f_j(v)=0$ since by the construction of $A$ it is the case that $f_j(v)+f_{-j}(v) = f (v)$. Therefore, the integrand is weakly negative.

\end{itemize}
\noindent $\pi_j^{A_j}\geq\pi^{\tilde A_j}_j$ contradicts the premise that $\pi^{\tilde A_j}_j > \pi_j^{A_j}$. Therefore, $A$ is a core allocation. 

\end{proof}

\begin{lemma}\label{noespw_lemma3}
There exist no core allocations which do not satisfy both B1 and B2.
\end{lemma}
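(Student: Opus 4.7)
The plan is to prove the contrapositive: any allocation $A$ failing either B1 or B2 admits a blocking firm, hence is not in the core. I would organize the case analysis around the sign of $v - w_i(v)$ and the presence of unemployment, re-using the interval-and-density construction of \Cref{case 3 remark} as the common technical tool.

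As a preliminary reduction, dispose of the overpayment case. If there exist a firm $j$ and a positive-measure set $V_+$ with $w_j(v) > v$ on $V_+$, then $j$ blocks via $\tilde A_j$ that sets $\tilde f_j(v)=0$ and $\tilde w_j(v)=0$ on $V_+$ and coincides with $A_j$ elsewhere; monotonicity is preserved because deleting employed productivities cannot create a new order violation among those that remain, Condition 2 of the block definition is satisfied, and the profit change $\int_{V_+}(w_j(v)-v)\,f_j(v)\,dv$ is strictly positive. So I may assume $w_i(v) \le v$ almost everywhere for both firms in the sequel.

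I then split on which of B1 and B2 fails. If B2 holds but B1 fails, then on a positive-measure $V_0$ one has $f_1 + f_2 < f$ and, by B2 together with the reduction, every employed worker on $V_0$ earns exactly her productivity; this is the hypothesis of \Cref{case 3 remark} and the remark supplies the block. If B1 holds but B2 fails, some firm $i$ has $f_i(v)>0$ and $w_i(v) < v$ on a positive-measure set, and the blocking firm is $j := -i$, which poaches on an interval produced by the Halmos density theorem applied to a refinement $V_-' := \{v : f_i(v) > 0,\ v - w_i(v) > \varepsilon\}$ obtained by the countable-exhaustion argument of the remark's footnote. Specifically, pick $I = [\underline v,\bar v]$ with $\mu(V_-' \cap I) > p\,\mu(I)$ for $p$ close to $1$, let $W$ be slightly above $\sup_{v \in V_-' \cap I} \max\{w_i(v), w_j(v)\}$, set $\tilde w_j(v) = W$ and $\tilde f_j(v) = f(v)$ on $V_-' \cap I$ (absorbing both firms' workers there via Condition 1), retain $A_j$ elsewhere, and fire any of $j$'s own workers at productivities above $\bar v$ whose original wage falls below $W$ in order to preserve global monotonicity. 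The remaining case---B1 and B2 both failing---is handled by a hybrid of these constructions, hiring unemployed workers on a Halmos interval at a low common wage via Condition 4 and patching with Condition 2 outside the interval.

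In each case, the calibration of $p$ close to $1$ relative to $\varepsilon$ and $\bar f$, in the spirit of \Cref{case 3 remark}, is what makes the block profitable: the first-order gain of order $\varepsilon\,p\,\mu(I)$ on the dense subset dominates the second-order losses of order $\bar f (1-p)\,\mu(I)$ and $\bar f\,\mu(I)^2$ from the thin complement $I \setminus V_-'$ and from boundary firings above $\bar v$. The principal obstacle I anticipate is the poaching sub-case: the requirement $\tilde w_j > w_{-j}$ on the poached set can conflict with monotonicity of $j$'s global wage schedule above $\bar v$, potentially forcing firings of profitable high-productivity hires. The resolution uses the smallness of $\mu(I)$---the mass of forced firings above $\bar v$ is of order at most $\bar f\,\mu(I)$ while the per-worker poaching gain $\varepsilon - \eta$ is bounded below independently of $\mu(I)$---so the net change is strictly positive once $p$ is taken sufficiently close to $1$.
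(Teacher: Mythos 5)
Your opening reduction (eliminate overpayment) and the reduction of the ``B2 holds, B1 fails'' subcase to \Cref{case 3 remark} are sound in spirit, but there are two concrete gaps, one of which is substantive.

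First, a minor gap in the B1-fails branch: B2 holding only forces $w_j(v)=v$ at productivities where \emph{some} firm actually hires, i.e.\ $f_j(v)>0$. On the positive-measure unemployment set there may be a positive-measure subset on which $f_1(v)=f_2(v)=0$, so both $w_j(v)=0<v$ and the hypothesis of \Cref{case 3 remark} (``$w_j(v)=v$ for all $v\in V$'') fails. This subcase still yields a block, but by a different and in fact easier construction (hire the unemployed at the wage floor $\sup_{v'\le v}w_j(v')$; this is the paper's Case~2), and your plan does not cover it.

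The substantive gap is in the poaching branch (``B1 holds, B2 fails''). You propose a single uniform poaching wage $W$ slightly above $\sup_{v\in V_-'\cap I}\max\{w_i(v),w_j(v)\}$ and set $\tilde f_j(v)=f(v)$ on $V_-'\cap I$. But the only bound on $w_j(v)$ (the blocking firm's own wage) and on $w_i(v)$ outside $V_-'$ is $\le v\le\bar v$, so $W$ can be arbitrarily close to $\bar v$. Paying $W$ to workers of productivity near $\underline v$ then loses roughly $W-\underline v\approx\bar v-\underline v=\mu(I)$ per worker, a first-order loss over the \emph{entire} interval, not a thin boundary set; shrinking $\mu(I)$ or pushing $p\to 1$ does not resolve this because the gain per poached worker is also bounded by the local gap $v-W$, which may be negative. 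The problem is most acute in what becomes the paper's Case~6: there $w_j(v)=v$ on the relevant set, so $\sup w_j=\bar v$ and $W>\bar v$, making every poached hire unprofitable. The paper avoids this by letting the poaching wage track the underlying monotone wage schedule pointwise (e.g.\ $\max\{w^p,\,w_j(v)+\varepsilon'\}$ in Case~4, which is monotone and stays uniformly below $v$ on the interval), by splitting into sub-cases according to whether the other firm's wage above $\bar v^p$ clears the poaching wage, and by a separate compactness argument in Case~6 to locate an interval on which $w_{-j}$ is uniformly at least $\delta$ below productivity, so the poaching wage $v_i'-\delta$ is below \emph{every} productivity in the interval. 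Your sketch needs one of these two repairs---either a productivity-dependent poaching wage that respects monotonicity, or a localization argument that pins $W$ strictly below $\underline v$---before the calibration of $p$ can close the argument.
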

\begin{proof}
 Suppose for contradiction that  there is a core allocation $A=\big\{(f_i(v),w_i(v))\big\}_{v \in [0,1], i =1,2}$ such that there exists a set $V$ with positive (Lebesgue) measure where either B1 or B2 fails for all $v\in V$. We proceed by considering six exhaustive cases: By countable additivity of measures, the set of productivities that fails one of B1 or B2 has positive measure if and only if at least one of the sets in the following six cases has a positive measure. 

First, suppose there exist a firm $j$ and a subset of productivities $V \subset [0,1]$ with positive measure such that $w_j(v)>v$ for all $v \in V$. Then $\tilde A_j$ where for all $v$: 

\begin{align*}
\tilde f_j(v) & := \begin{cases} 
f_j(v) & \text{if }  v\notin V,  \\
0 & \text{if }  v\in V.
\end{cases}
~~~~~~ \tilde w_j(v) := \begin{cases} 
w_j(v) & \text{if }  v\notin V,  \\
0 & \text{if }  v\in V.
\end{cases}
\end{align*}
\noindent blocks $A$ as $j$'s profit increases and Condition 4 of the definition of block is satisfied for all $v\in V$ (i.e. the workers in $V$  are fired) and for all $v\in [0,1]\setminus V$ Condition 2 of the definition of block is satisfied (i.e. there is no change in the hiring or wages of workers in $[0,1]\setminus V$). By construction, $\tilde w_j(v)$ satisfies our monotonicity condition.  

Therefore, we proceed with the assumption that for each firm $j$, $w_j(v)\leq v$ for almost all $v$. 
Second, suppose there exist a firm $j$ and a subset of productivities $V$ with positive measure such that  $f_1(v)+f_2(v) < f(v)$ and $w_j(v)< v$ for all $v\in V$. Then $\tilde A_j$ where for all $v$:

 \begin{align*}
\tilde f_j(v) :=  
f(v)-f_{-j}(v).  
~~~~~~~~~~~\tilde w_j(v) :=  \begin{cases} 
0 & \text{if } \tilde f_j(v)=0, \\
\underset{v' \leq  v}{\sup~} w_{j}(v') &  \text{otherwise.}
\end{cases}
\end{align*}

\noindent blocks $A$ as firm $j$'s profit increases  as some previously unemployed workers are hired at a wage strictly less than their productivity while all existing workers at $j$ continue to be employed at the same wage as before, and Condition 2 of the definition of block is satisfied for all $v\in[0,1]$ (i.e. no worker receives a wage cut and no workers are poached from firm $-j$). By construction, $\tilde w_j(v)$ satisfies monotonicity.

Third,  there exists a firm $j$ and a subset of productivities $V$ with positive measure such that 
 $f_1(v)+f_2(v) < f(v)$ and $w_j(v)= v$ for all $v\in V$. This case has been addressed in \Cref{case 3 remark}.

The previous two cases exhaust the possibility of a core allocation in which  $f_1(v)+f_2(v) < f (v)$ for any subset of productivities with positive measure. Therefore, we proceed with the assumption that $f_1(v)+f_2(v) = f (v)$ for almost all $v$.

Fourth, suppose that there exist $j$ and a set $V$ of productivities with positive measure such that $w_j  (v)< v$ and $f_j(v)= f (v)$ for all $v\in V$. Then, there exists $\varepsilon\in(0,1)$ and $V'$ with positive measure such that $w_j  (v)< v-\varepsilon$ and $f_j(v)= f (v)$ for all $v\in V'$.\footnote{The proof is analogous to that in Footnote \ref{proof_footnote}.} For any $p<1$, by \citet[Theorem A, Page 68]{Halmos}, there exists an interval $I^p:=[\underline v^p,\bar v^p] \subseteq [0,1]$ such that $\mu(V' \cap I^p)>p \mu(I^p)$, where $\mu(\cdot)$ is the Lebesgue measure.
Consider the following cases.
\begin{enumerate}
   \item Suppose that there is no $V^p \subseteq [\bar v^p, 1]$ with positive measure such that $w  _{-j}(v) < w  _j(\bar v^p)$ and $f_{-j} (v)>0$ for all $v\in V^p$ and for all $p$ sufficiently close to 1. 
     Let $w^p:=\sup_{v < \underline v^p} w_{-j}  (v).$ 
     Then for a constant $\varepsilon' \in (0,\varepsilon)$, $\tilde A_{-j}$ where for all $v$:
  
\begin{align*}
\hspace{-20mm}\tilde f_{-j}(v) & := \begin{cases} 
 f (v) & \text{ if } v \in I^p,  \\
 f_{-j}(v) & \text{ otherwise.}
\end{cases}
~~~~~ \tilde w_{-j}  (v)  := \begin{cases}
0 & \text{if } \tilde f_{-j}(v)=0,\\
\max\{w^p,w_{j}  (v)+\varepsilon'\} & \text{if } v \in I^p,  \\
\max\{\sup_{v \le \bar v^p} w_{j}  (v)+\varepsilon',w_{-j}  (v)\} & \text{if } v \in [\bar v^p, 1] \text { and } \tilde f_{-j}(v)>0,  \\
w_{-j}  (v) & \text{otherwise.}
\end{cases}
\end{align*}
 \noindent blocks $A$ for sufficiently small $\varepsilon'$ for the following reasons: First, for all $v\in I^p$, Condition 4 of the definition of block is satisfied, and second, for all $v\notin I^p$, Condition 2 of the definition of block is satisfied. Note also that $\tilde w_j  (v)$ satisfies monotonicity by construction. It therefore remains only to show that $\tilde A_{-j}$ increases the profit of firm $-j$.

We proceed by showing that firm $-j$'s ``gain'' from poaching workers in  $I^p \cap V'$ exceeds the ``loss'' of at most $p$ fraction of workers over interval $I^p$ for sufficiently large $p$. After doing so, we show that the loss in profit resulting from the increased wage to workers with productivities $v\geq \bar v^p$ is arbitrarily small, thus completing the argument that $-j$'s profit increases.

The loss from losing existing  workers in $I^p$ is upper bounded by 
\begin{equation}\label{eq:loss}
    \bar f  (1-p)(\bar v^p-\underline v^p)(\bar v^p-w^p).  
\end{equation}

This follows because, in the worst case, there are at most $p$ fraction of workers in $I^p$ who are lost by firm $-j$, with this $p$ fraction  loaded into the rightmost part of $I^p$.

The gain from poaching workers in $I^p \cap V'$ is at least

\[
\underline f \stackrel[\underline{v}^p]{\bar v^p-(1-p)(\bar v^p-\underline v^p)}{\int}\min\{v-w^p,\varepsilon-\varepsilon'\}dv.\]

Let $v^p:=\max\left\{\min\big\{ w^p+\varepsilon-\varepsilon', \bar v^p-(1-p)(\bar v^p-\underline v^p)\big\},\underline v^p\right\}$. Then we can rewrite the lower bound on the gain as 

\begin{equation*}
\underline f \stackrel[\underline{v}^p]{v^p}{\int}(v-w^p)dv+ \underline f \stackrel[v^p]{\bar v^p-(1-p)(\bar v^p-\underline v^p)}{\int}(\varepsilon-\varepsilon')dv. 
\end{equation*}

We can rewrite this as:

 \begin{equation*}
(v^p-\underline v^p)(\underline v^p-w^p)\underline f +\frac{1}{2}(v^p-\underline v^p)^2\underline f  + (\varepsilon-\varepsilon')\big[\bar v^p-(1-p)(\bar v^p-\underline v^p)-v^p\big]\underline f ,
\end{equation*}

which, because $\varepsilon-\varepsilon'\in (0,1)$ and all bracketed terms are non-negative, is no smaller than 

 \begin{align}
& (\varepsilon-\varepsilon')\frac{1}{2}\underline f \left[(v^p-\underline v^p)(\underline v^p-w^p)+(v^p-\underline v^p)^2+\big[\bar v^p-(1-p)(\bar v^p-\underline v^p)-v^p\big]\right] \notag\\
= & 
(\varepsilon-\varepsilon')\frac{1}{2}\underline f \left[(v^p-\underline v^p)(v^p-w^p)+\big[\bar v^p-(1-p)(\bar v^p-\underline v^p)-v^p\big]\right]& \label{eq:gain}
\end{align}

 and therefore, a lower bound on ``the net gain,'' i.e. \eqref{eq:gain}-\eqref{eq:loss}, equals

\begin{align}
\hspace{-8mm}
     (\varepsilon-\varepsilon')\frac{1}{2}\underline f \left[(v^p-\underline v^p)(v^p-w^p)+\big[\bar v^p-(1-p)\big(\bar v^p-\underline v^p)-v^p\big]\right]-\bar f  (1-p)(\bar v^p-\underline v^p)(\bar v^p-w^p).\label{eq:combined_4.1}
 \end{align}

We now consider \eqref{eq:combined_4.1} in light of the three possible values $v^p$ can take:

 First, suppose that $v^p=w^p+\varepsilon-\varepsilon'.$ 
\eqref{eq:combined_4.1} is proportional to
\begin{align*}
     \frac{(v^p-\underline v^p)(v^p-w^p)+\bar v^p-v^p}{(1-p)(\bar v^p-\underline v^p)}-1-\frac{\bar f  (\bar v^p-w^p)}{(\varepsilon-\varepsilon')\frac{1}{2}\underline f },
 \end{align*}
and
\begin{align}
\hspace{-7mm}
    \frac{(v^p-\underline v^p)(v^p-w^p)+\bar v^p-v^p}{(1-p)(\bar v^p-\underline v^p)}-1-\frac{\bar f  (\bar v^p-w^p)}{(\varepsilon-\varepsilon')\frac{1}{2}\underline f } &= \frac{(v^p-\underline v^p)(v^p-w^p)+\underline v ^p -v^p}{(1-p)(\bar v^p-\underline v^p)}+\frac{1}{1-p}-1-\frac{\bar f  (\bar v^p-w^p)}{(\varepsilon-\varepsilon')\frac{1}{2}\underline f } \notag\\
    &= \frac{(v^p-\underline v^p)\left[(v^p-w^p)-1\right]}{(1-p)(\bar v^p-\underline v^p)}+\frac{1}{1-p}-1-\frac{\bar f  (\bar v^p-w^p)}{(\varepsilon-\varepsilon')\frac{1}{2}\underline f } \notag\\
    & \geq \frac{(v^p-w^p)-1}{(1-p)}+\frac{1}{1-p}-1-\frac{\bar f  (\bar v^p-w^p)}{(\varepsilon-\varepsilon')\frac{1}{2}\underline f } \notag \\
    &= \frac{v^p-w^p}{1-p}-1-\frac{\bar f  (\bar v^p-w^p)}{(\varepsilon-\varepsilon')\frac{1}{2}\underline f } \notag\\
    &=\frac{\varepsilon-\varepsilon'}{1-p}-1-\frac{\bar f  (\bar v^p-w^p)}{(\varepsilon-\varepsilon')\frac{1}{2}\underline f }, \notag
\end{align}
\noindent where the first inequality follows because  $1\geq \bar v^p >\underline v^p$  and $\bar v^p\geq v^p\geq \underline v^p \geq w^p\geq 0$ for all $p$ which implies that $\frac{v^p-\underline v^p}{\bar v^p-\underline v^p}\in[0,1]$ and $v^p-w^p\leq 1$, the final equality follows because $v^p=w^p+\varepsilon-\varepsilon'$.   
Clearly this expression is positive for any sufficiently large $p<1$ since $\varepsilon-\varepsilon'>0$ by assumption.

Second, suppose that $v^p=\bar v^p-(1-p)(\bar v^p-\underline v^p)$. Then \eqref{eq:combined_4.1} is proportional to

$$
    \frac{(\varepsilon-\varepsilon')\frac{1}{2}\underline f }{(1-p)}\frac{p(v^p-w^p)}{\bar v^p-w^p}-\bar f. 
$$

\noindent  
We can see that for all $p$

\begin{align}
    \frac{(\varepsilon-\varepsilon')\frac{1}{2}\underline f }{(1-p)}\frac{p(v^p-w^p)}{\bar v^p-w^p}-\bar f &= \frac{(\varepsilon-\varepsilon')\frac{1}{2}\underline f }{(1-p)}\left[p-p(1-p)\frac{\bar v^p-\underline v^p}{\bar v^p-w^p}\right]-\bar f  \notag \\
    &\geq(\varepsilon-\varepsilon')\frac{1}{2}\underline f \left[\frac{p}{1-p}-p\right]-\bar f  \notag \\
    &= (\varepsilon-\varepsilon')\frac{1}{2}\underline f \frac{p^2}{1-p}-\bar f,  \notag
\end{align}

\noindent where the first equality comes from substituting in $v^p=\bar v^p-(1-p)(\bar v^p-\underline v^p),$ and the inequality follows because $\frac{\bar v^p-\underline v^p}{\bar v^p-w^p}\leq 1$ for all $p$ because $\bar v^p >\underline v^p$  and $\bar v^p\geq v^p\geq \underline v^p \geq w^p$.  
Clearly this expression is positive for any sufficiently large $p<1$ since $\varepsilon-\varepsilon'>0$ by assumption.

Third, suppose that $v^p=\underline v^p.$ Then \eqref{eq:combined_4.1} is proportional to
 \begin{align*}
    (\varepsilon-\varepsilon')\frac{1}{2}\underline f \frac{p}{1-p}-\bar f  (\bar v^p-w^p).
 \end{align*}
Noting that each of the terms in parentheses is non-negative by construction and $\bar v^p>\underline v^p\geq w^p$, then 
the above equation is bounded below by 

\begin{align*}
    (\varepsilon-\varepsilon')\frac{1}{2}\underline f \frac{p}{1-p}-\bar f .
 \end{align*} 

This expression is positive for any sufficiently  large $p<1$ since $\varepsilon-\varepsilon'>0$ by assumption.

Therefore, renormalizing the calculated ``net gain'' term from each of the three possible values $v^p$ can take, we have shown that firm $-j$'s change in profit from workers with $v\in I^p\cap V'$ is at least 
\begin{align}
    \hspace{-20mm}  (1-p)(\bar v^p-\underline v^p) \times 
  \min\left \{(\varepsilon-\varepsilon')\frac{1}{2}\underline f  \left [\frac{\varepsilon-\varepsilon'}{1-p}-1-\frac{\bar f  (\bar v^p-w^p)}{(\varepsilon-\varepsilon')\frac{1}{2}\underline f } \right ], (\bar v^p-w^p) \left [ (\varepsilon-\varepsilon')\frac{1}{2}\underline f \frac{p^2}{1-p}-\bar f  \right], (\varepsilon-\varepsilon')\frac{1}{2}\underline f \frac{p}{1-p}-\bar f \right \},\notag
\end{align}
and this expression is positive for every $\varepsilon' \in (0,\varepsilon)$ and sufficiently large $p<1$. Moreover, it can be observed by inspection that this expression is decreasing in $\varepsilon'.$ Furthermore, the wage paid for workers in $[\bar v^p, 1]$ may increase at most by $\varepsilon',$ resulting in a loss of profit from the increased wage being bounded from above by $\varepsilon'.$ From these observations, for any sufficiently large $p <1$ and sufficiently small $\varepsilon'>0,$ firm $-j$ strictly profits with the block, i.e. $\pi_{-j}^{\tilde A_{-j}}>\pi_{-j}^{A_{-j}}$ as desired.

\item Suppose that there exists a subset of $[0,1)$ whose supremum is $1$ such that, for each $p$ in that subset, there is a set $V^p \subseteq [\bar v^p, 1]$ with positive measure such that $w  _{-j}(v) < w  _j(\bar v^p)$ and $f_{-j} (v)>0$ for all $v\in V^p$. Fix any such $p$ and $V^p$.
 Consider $\tilde A_j$ where for all $v$:
\begin{align*}
\tilde f_j(v) & := \begin{cases} 
f (v)  & \text{if }  v \in V^p,\\
f_j(v) & \text{otherwise.}
\end{cases}
~~~~~~ \tilde w_j  (v) := \begin{cases} 
0 & \text{if } \tilde f_j(v)=0,  \\
w_j  (v) & \text{otherwise.}
\end{cases}
\end{align*}

$\tilde A_j$ blocks $A_j$ for the following reasons: Condition 1 of the definition of block is satisfied for all $v\in V^p$ since $w_{-j}  (v)<w_j  (\bar v^p)\leq \tilde w_j  (v)$ for all $v\in V^p$ by construction, and Condition 2 of the definition of block is satisfied for all $v\notin V^p$. Moreover, firm $j$ obtains a strictly higher profit under this allocation.\footnote{In the proposed block $\tilde A_j$, one could alternatively set $\tilde w_j  (v) := w_j  (v) \text{ for every } v \in [0,1],
$ and the proof works without change.}
\end{enumerate}

Fifth, suppose that there exist $j$ and $V \subset [0,1]$ with positive measure such that $0\le w_{-j}  (v)\leq w_j  (v)<v$ and $f_j(v)\in(0, f (v))$ for all $v \in V$. Then, there exists $\varepsilon>0$ and $V' \subset V$ with positive measure such that  $0\le w_{-j}  (v)\leq w_j  (v)<v-\varepsilon$ and $f_j(v)\in(0, f (v)-\varepsilon)$ for all $v \in V'.$  Then for a constant $\varepsilon'>0$, consider $\tilde A_{j}$ where for all $v$:

\begin{align*}
\tilde f_{j}(v) & := \begin{cases} 
f (v) & \text{ if }  v \in V', \\
f_j(v) & \text{otherwise.}
\end{cases}
~~~~ \tilde w_{j}  (v)  := \begin{cases} 
0 & \text{if } \tilde f_j(v)=0,  \\
w_j  (v)+\varepsilon' & \text{otherwise.}
\end{cases}
\end{align*}
$\tilde A_{j}$ blocks $A_{j}$ for any sufficiently small $\varepsilon'>0$ for the following reasons: Condition 1 of the definition of block is satisfied for all $v\in V'$ since $w_{-j}  (v)\leq w_j  (\bar v)< \tilde w_{j}  (v)$ for all $v\in V'$ by construction, and Condition 2 of the definition of block is satisfied for all $v\notin V'$. To see that firm $j$'s profit increases, first note that $j$ benefits from hiring workers from $V'$, which results in an additional profit of at least $(\varepsilon-\varepsilon')\varepsilon \mu(V').$ Meanwhile, $j$ may lose from paying more for existing workers, but the associated loss is bounded from above by $\varepsilon' \beta.$ Therefore, for any sufficiently small $\varepsilon',$ firm $j$'s profit increases, as desired. Note also that  monotonicity is satisfied by $\tilde w  _j(\cdot)$ because $w  _j(\cdot)$ is monotone and $\varepsilon'$ is a constant.

Cases 4 and 5 exhaust the possibility of a core allocation in which there exists a set $V'$ of positive  measure such that $\max\{w  _1(v),w  _2(v)\}<v$ for almost all $v\in V'$. Therefore, we proceed with the assumption that for almost any $v\in[0,1]$ there exists a firm $j$ such that $w_j(v)=v$.

Sixth, suppose there exist a set $V''$ of positive measure and a firm $j$ such that $0\leq w  _{-j}(v)<w  _j(v)=v$ and $f_{-j}(v)\in(0, f (v))$ for all $v\in V''.$ Intuitively, we proceed by showing that firm $j$ can fire some subset of its workers who receive wages equal to productivity, and poach workers of the same productivity from firm $-j$. We proceed by constructing a set of workers with positive measure where such a maneuver is feasible.

Following earlier arguments, there exist $\delta>0$ and a set $V'$ with $\mu(V')>0$ such that $0\leq w  _{-j}(v)+\delta<w  _j(v)=v$ and $f_{-j}(v)\in(0, f (v))$ for all $v\in V'.$

Let $\mathrm{cl}(V')$ be the closure of $V'$. $\mathrm{cl}(V')$ is compact because it is a closed and bounded subset of $[0,1]$. For any $v \in [0,1]$ and $\varepsilon>0$, define $B_\varepsilon(v):=(v-\varepsilon,v+\varepsilon) \cap [0,1]$ to be the $\varepsilon$-ball around $v.$ 
Consider a collection of sets $\{B_{\varepsilon}(v')\}_{v' \in \mathrm{cl}(V')}$ where $\varepsilon < \frac{\delta}{2}$. It is obvious that $\{B_{\varepsilon}(v')\}_{v' \in \mathrm{cl}(V')}$ covers $\mathrm{cl}(V')$ and, because $\mathrm{cl}(V')$ is compact,  there exist $v_1,v_2,\dots,v_n \in \mathrm{cl}(V')$ such that
$\{B_{\varepsilon}(v_i)\}_{i=1}^n$ covers $\mathrm{cl}(V')$, that is, 
$$
\bigcup_{i=1}^n B_{\varepsilon}(v_i) \supseteq \mathrm{cl}(V').
$$
Therefore, it follows that 
$$
\bigcup_{i=1}^n \big [ B_{\varepsilon}(v_i) \cap V' \big ]= V'.
$$
Because $\mu(V')>0$, this implies that 
$$
\mu\left (\bigcup_{i=1}^n \big [ B_{\varepsilon}(v_i) \cap V' \big ]\right ) >0,$$
so there exists $i \in \{1,\dots,n\}$ such that
$
\mu\left (B_{\varepsilon}(v_i) \cap V'\right ) >0.$

Given the conclusion of the preceding paragraph, fix $i\in\{1,\dots n\}$ such that $
\mu\left (B_{\varepsilon}(v_i) \cap V'\right ) >0.$
 We will show that there exists $v'_i\in B_{\varepsilon}(v_i)\cap V'$ such that $\mu\left ([v_i-\varepsilon,v_i'] \cap V'\right ) >0.$ To see this, suppose not for contradiction. Let $\bar v:=\sup B_{\varepsilon}(v_i) \cap V'$. Take a sequence  $(v^k)_{k=1}^\infty$ such that $v^k \in B_{\varepsilon}(v_i) \cap V'$ for each $k$ and $\lim_{k \to \infty} v^k=\bar v$ (such a sequence exists by definition of $\bar v$.) By the assumption made for the purpose of contradiction, we have that $\mu\left ([v_i-\varepsilon,v^k] \cap V'\right )=0$ for each $k=1,2,\dots$. Since the sets $\left ([v_i-\varepsilon,v^k] \cap V'\right )_{k=1}^\infty$ form an increasing sequence of measurable sets, 
we have $0=\mu\big ([v_i-\varepsilon,\bar v] \cap V'\big)=\mu\big ([v_i-\varepsilon,v_i+\varepsilon] \cap V'\big )=\mu\big (B_{\varepsilon}(v_i) \cap V'\big )>0$, where the inequality is assumed at the beginning of the current paragraph. This is a contradiction.

Therefore, following the preceding paragraph,  fix $v'_i\in B_{\varepsilon}(v_i)\cap V'$ with the property that $\mu\big ([v_i-\varepsilon,v_i'] \cap V'\big) >0.$
Because $v_i' < v_i+\varepsilon$  and $\varepsilon<\frac{\delta}{2}$, we have $[v_i'-\delta, v_i'] \supseteq [v_i-\varepsilon, v_i'].$
Hence, noting that $[v_i'-\delta, v_i'] \cap V'$ and $[v_i-\varepsilon, v_i'] \cap V'$ are measurable, 
$\mu\big([v_i'-\delta, v_i'] \cap V'\big) \geq \mu\big([v_i-\varepsilon, v_i'] \cap V'\big)>0.$

We now show firm $j$ can block allocation $A$ via workers whose productivities fall in $[v_i'-\delta, v_i']$. To do so, we observe that $w  _{-j}(v_i')<v_i'-\delta$ because $v_i' \in V'$. Thus, by the monotonicity of $w  _{-j}$, $w  _{-j}(v)<v'_i-\delta$ for all $v \in [v_i'-\delta,v_i'].$ This implies  $w  _{-j}(v)<v$ for all $v \in [v'_i-\delta, v'_i]$. Therefore, by the ongoing assumption (following the conclusions of Cases 4 and 5) that  $\max\{w  _1(v),w  _2(v)\}=v$ for almost every $v \in [0,1]$, it follows that $w  _j(v)=v$ for almost all $v \in [v_i'-\delta,v_i'].$

Consider $\tilde A_j$ where
 \begin{align*}
 \hspace{-1mm}
\tilde f_{j}(v) & := \begin{cases} 
f_{j}(v) & \text{ if }  v \notin [v_i'-\delta,v_i'], \\
f_{-j}(v) & \text{ if }  v \in [v_i'-\delta,v_i'].
\end{cases} 
~~~~~~~~~~~\tilde w_j  (v) :=  \begin{cases} 
 w_{j}  (v) &  \text{ if } v \notin [v_i'-\delta,v_i'],\\
v_i'-\delta & \text{ if }  v \in [v_i'-\delta,v_i'] \text{ and } \tilde f_{j}(v)>0,\\ 
0 & \text{ otherwise.}
\end{cases}
\end{align*}
 $\tilde A_j$ blocks $A_j$ for the following reasons: First, it is obvious from construction that $\tilde w_j  $ satisfies monotonicity.  Condition 3 of the definition of block is satisfied for all $v\in [v_i'-\delta,v_i']$ (i.e. the workers previously employed by firm $-j$ are successfully poached and some workers are fired), 
and Condition 2 of the definition of block is satisfied for all $v\notin [v_i'-\delta,v_i']$ (i.e. workers in this set do not experience changes to hiring or wages). It is also the case that $\tilde A_j$ provides firm $j$ with higher profit than $A_j$: newly poached workers from $[v_i'-\delta,v_i']\cap V'$
(of whom there are a positive measure) are paid lower wages than their productivity in allocation $\tilde A_j$ while all newly-fired workers are from $[v_i'-\delta,v_i']$ and received wages equal to productivity from $j$ in allocation $A_j$. This shows that $A$ is not a core allocation.

As these six cases are exhaustive and none of them admits a core allocation, we have completed the argument that any core allocation must be a Bertrand allocation. 
\end{proof}
\end{proof}

\newpage

\singlespacing
\bibliographystyle{plainnat}
\bibliography{eplbib}

\begin{thebibliography}{15}
\providecommand{\natexlab}[1]{#1}
\providecommand{\url}[1]{\texttt{#1}}
\expandafter\ifx\csname urlstyle\endcsname\relax
  \providecommand{\doi}[1]{doi: #1}\else
  \providecommand{\doi}{doi: \begingroup \urlstyle{rm}\Url}\fi

\bibitem[Azevedo and Gottlieb(2017)]{azevedo}
Eduardo~M. Azevedo and Daniel Gottlieb.
\newblock {Perfect Competition in Markets With Adverse Selection}.
\newblock \emph{Econometrica}, 85\penalty0 (1):\penalty0 67--105, 2017.

\bibitem[Balinski and S\"{o}nmez(1999)]{BalinskiSonmez99}
Michel Balinski and Tayfun S\"{o}nmez.
\newblock {A Tale of Two Mechanisms: Student Placement}.
\newblock \emph{Journal of Economic Theory}, 84\penalty0 (1):\penalty0 73--94, 1999.

\bibitem[Cahuc et~al.(2006)Cahuc, Postel-Vinay, and Robin]{Cahuc2006}
Pierre Cahuc, Fabien Postel-Vinay, and Jean-Marc Robin.
\newblock {Wage Bargaining with On-the-Job Search: Theory and Evidence}.
\newblock \emph{Econometrica}, 74\penalty0 (2):\penalty0 323--364, 2006.

\bibitem[Costrell and Loury(2004)]{costrelllourey}
Robert~M. Costrell and Glenn~C. Loury.
\newblock {Distribution of Ability and Earnings in a Hierarchical Job Assignment Model}.
\newblock \emph{Journal of Political Economy}, 112\penalty0 (6):\penalty0 1322--1363, 2004.

\bibitem[Cowgill and Pakzad-Hurson(2025)]{cowgill-PH}
Bo~Cowgill and Bobak Pakzad-Hurson.
\newblock {Equal Pay for Equal Work (for Equal Pay)}.
\newblock \emph{mimeo}, 2025.

\bibitem[Einav et~al.(2021)Einav, Finkelstein, and Mahoney]{EINAV2021389}
Liran Einav, Amy Finkelstein, and Neale Mahoney.
\newblock {The IO of selection markets}.
\newblock In Kate Ho, Ali Horta\c{c}su, and Alessandro Lizzeri, editors, \emph{Handbook of Industrial Organization}, volume~5 of \emph{Handbook of Industrial Organization}, pages 389--426. Elsevier, 2021.

\bibitem[Gentile~Passaro et~al.(2024)Gentile~Passaro, Kojima, and Pakzad-Hurson]{Gentile2024}
Diego Gentile~Passaro, Fuhito Kojima, and Bobak Pakzad-Hurson.
\newblock {Equal Pay for \emph{Similar} Work}.
\newblock \emph{Working Paper}, 2024.

\bibitem[Halmos(1974)]{Halmos}
Paul~R. Halmos.
\newblock \emph{{Measure Theory}}.
\newblock Springer-Verlag, 1974.

\bibitem[Nei and Pakzad-Hurson(2021)]{NPH}
Stephen Nei and Bobak Pakzad-Hurson.
\newblock {Strategic Disaggregation in Matching Markets}.
\newblock \emph{Journal of Economic Theory}, 197, 2021.

\bibitem[Postel-Vinay and Robin(2002)]{Postel2002}
Fabien Postel-Vinay and Jean-Marc Robin.
\newblock {Equilibrium Wage Dispersion with Worker and Employer Heterogeneity}.
\newblock \emph{Econometrica}, 70\penalty0 (6):\penalty0 2295--2350, 2002.

\bibitem[Rothschild and Stiglitz(1976)]{rothschildstiglitz}
Michael Rothschild and Joseph Stiglitz.
\newblock {Equilibrium in Competitive Insurance Markets: An Essay on the Economics of Imperfect Information}.
\newblock \emph{The Quarterly Journal of Economics}, 90\penalty0 (4):\penalty0 629--649, 1976.

\bibitem[Royden and Fitzpatrick(2010)]{royden}
H.L. Royden and P.M. Fitzpatrick.
\newblock \emph{Real Analysis (Fourth Edition)}.
\newblock Prentice Hall, 2010.

\bibitem[S\"{o}nmez(2013)]{sonmez}
Tayfun S\"{o}nmez.
\newblock {Bidding for Army Career Specialties: Improving the ROTC Branching Mechanism}.
\newblock \emph{Journal of Political Economy}, 121\penalty0 (1):\penalty0 186--219, 2013.

\bibitem[Stokey and Lucas(1989)]{stokeylucas}
Nancy~L. Stokey and Robert~E. Lucas, Jr.
\newblock \emph{{Recursive methods in economic dynamics}}.
\newblock Harvard University Press, 1989.

\bibitem[Thomadsen(2005)]{thomadsen05ownership}
Raphael Thomadsen.
\newblock {The Effect of Ownership Structure on Prices in Geographically Differentiated Industries}.
\newblock \emph{The RAND Journal of Economics}, 36\penalty0 (4):\penalty0 908--929, 2005.

\end{thebibliography}
\end{document}